\newtheorem{theorem}{Theorem}
\newtheorem{problem}{Problem}
\newtheorem{lemma}{Lemma}
\newtheorem{definition}{Definition}
\newtheorem{remark}{Remark}
\begin{document}
%
\title{Electric Grid Topology and Admittance Estimation: Quantifying Phasor-based Measurement Requirements}

\author{Norak Rin, Iman Shames, Ian R. Petersen, and Elizabeth L. Ratnam
 \thanks {N. Rin, I. Shames, and I. R. Petersen: School of Engineering, The Australian National University, Australia. (E-mail: norak.rin@anu.edu.au; iman.shames@anu.edu.au; ian.petersen@anu.edu.au). N. Rin is also with CSIRO, Australia. \newline \indent E. L. Ratnam: Department of Electrical and Computer Systems Engineering, Monash University, Australia. (E-mail: liz.ratnam@monash.edu).}}

\maketitle
\begin{abstract}
In this paper, we quantify voltage and current phasor-based measurement requirements for the unique estimation of the electric grid topology and admittance parameters. Our approach is underpinned by the concept of a rigidity matrix that has been extensively studied in graph rigidity theory. Specifically, we show that the rank of the rigidity matrix is the same as that of a voltage coefficient matrix in a corresponding electric power system. Accordingly, we show that there is a minimum number of measurements required to uniquely estimate the admittance matrix and corresponding grid topology. By means of a numerical example on the IEEE 4-node radial network, we demonstrate that our approach is suitable for applications in electric power grids.
\end{abstract}

\IEEEpeerreviewmaketitle

\section{Introduction}

Globally, power systems are transforming in response to climate change, while adapting to withstand more severe and frequent weather events \cite{Comparative, 5618534}. To support power system operators in adapting to the electric grid transformation, enhanced real-time estimation of the grid topology, admittance parameters, and conductor loadings under evolving grid conditions are necessary --- including at the distribution level which has historically had very limited monitoring \cite{1626410}.

As the electric grid transformation progresses, new approaches to real-time grid topology and admittance parameter estimation are emerging \cite{deka2023learning, AReview}. Topology identification of both the transmission and distribution grid coupled with admittance parameter estimation enables the development of control technology to improve grid-resilience \cite{7463503}. Grid-resilience tasks include detecting the location and types of faults occurring in power systems during severe weather events --- which facilitates the safe and swift restoration of the electricity supply \cite{7017458}. 

Many authors have proposed model-based methods to identify the grid topology based on measurement data from grid sensors including phasor measurement units (PMUs) \cite{Estimating, Data, 9094730}. For instance, Deka et al. in \cite{Estimating} propose a graphical model learning algorithm to estimate electrically connected edges in a radial grid topology based on conditional independence tests for voltage measurements. Cavraro et al. in \cite{Data} propose an algorithm to detect switching events that change the topology of the electric grid. Specifically, the data-driven algorithm in \cite{Data} compares voltage measurements from micro-PMUs with a library of signatures representing possible topology changes. 
By using smart meter data, Zhao et al. \cite{9094730} propose a topology identification method of a distribution grid by analysing the nodal voltage correlations based on a Markov Random Field (MRF) structure model. 

By contrast, data-driven methods have been proposed to achieve both topology and admittance parameter estimation \cite{8122027, On, 9858017}. For example, Yu et al. in \cite{8122027} propose a data-driven algorithm to jointly estimate the topology and admittance parameters based on voltage and current measurements and the power flow equations. Specifically, the admittance parameters are estimated via a maximum-likelihood method while the topology is identified by constructing the bus admittance matrix from the estimated admittance parameters. The authors in \cite{On} offer an approach to online topology and admittance estimation based on a least absolute shrinkage and selection operator (LASSO) regression method and a matrix decomposition algorithm that ingress time-series voltage and current phasor measurements. The authors in \cite{9858017} propose an algorithm based on graph theory to estimate the admittance matrix when measurement data is not available at each node.

Several authors have proposed regression-based methods for both topology and admittance parameter estimation. For example, Lang et al. in \cite{Structure} propose an admittance parameter estimation problem using a structured total least squares method with the layout of the problem formulation similar to our study. However, the approach in \cite{Structure} assumes that the grid topology is known in advance. Similarly, Mishra and de Callafon in \cite{9930858} propose an estimation problem to identify three-phase line admittance in the electric grid using a recursive least squares approach with synchrophasor measurements. The approach in \cite{9930858} mentions the required number of measurements to estimate the admittance parameters --- but only for the case where the grid topology is known in advance. None of the aforementioned approaches \cite{Data,Estimating,9094730,8122027, On,9858017,Structure}, with the exception of \cite{9930858}, quantify the number of measurements required to estimate the grid topology and admittance parameters. 

In this paper, we show that there is a minimum number of measurements required to uniquely estimate both the grid topology and admittance parameters for power system networks --- in the absence of any prior knowledge about the grid physics or topology. We show that with voltage and current synchrophasor data available at each node in a transmission or distribution grid, we can estimate the respective grid topology and admittance parameters through the formation of an admittance matrix. Specifically, we investigate the invertibility of a set of equations which are solved to find the admittance parameters. These equations are derived using Kirchhoff's laws. We then develop a connection between this invertibility condition and a rigidity matrix used in graph rigidity theory \cite{Therigidity}. We show that the invertibility condition is equivalent to a condition on the rank of the rigidity matrix. With this rank information, we then determine the minimum number of voltage and current measurements required to estimate the grid topology and admittance parameters. 
 
This paper is organized as follows. In Section 2, we introduce the topology and admittance matrix of an electrical network. In Section 3, we introduce the concept of a rigidity matrix and its relevance to estimating the topology and admittance matrix of the electrical network. In Section 4, we present a simple numerical example for the IEEE 4-node radial network. Section 5 concludes the paper.

\subsection*{Notation}

Let $\mathbb{R}^n$ and $\mathbb{C}^n$ denote the sets of \textit{n}-dimensional vectors of real and complex numbers, respectively. Let $\mathbb{S}^n$ denote the set of $n\times n$ complex symmetric matrices. Throughout, boldface represents vector or matrix quantities whereas scalar quantities are non-bold. Let $\textbf{\textit{X}}^\top$ denote the transpose of a real matrix $\textbf{\textit{X}}$.
Let $\mathbf{1}_{n}\in\mathbb{R}^{n}$ and $\mathbf{0}_{n}\in\mathbb{R}^{n}$ denote an $n$-dimensional vectors which have     all elements equal to $1$ and $0$, respectively. We use the shorthand ( $\dot{}$ ) to denote the derivative operator. Let $\textbf{\textit{D}}=\mathrm{diag}\left(a_1,a_2,\dotsc,a_n\right)$ $\in\mathbb{R}^{n\times n}$ (or $\mathbb{C}^{n\times n}$) denote a diagonal matrix where
\begin{equation}\nonumber
 D_{i,j} = 
    \begin{cases}
      a_i & \text{if $i = j \in\{1,2,\dotsc,n\}$},\\
      0 & \text{if $i \neq j$.}
    \end{cases}       
\end{equation}

\section{Preliminaries}
Our approach to estimate the topology and admittance parameters of an electric grid assumes no prior information, i.e., we assume a complete graph. The topology of an electric grid is rarely characterized by a complete graph, accordingly, we remove edges of the complete graph to determine the grid topology. In what follows, we introduce relevant notation and concepts supporting our approach to estimate the topology and admittance parameters of an electric grid.

Consider a \emph{complete graph} $\mathcal{G}(\mathcal{N,E})$, with the node set $\mathcal{N}= \{0,1,\dotsc,n\}$ and  the edge set $\mathcal{E}= \{(i,j)| i\in\mathcal{N}\; ,j\in\mathcal{N},\; i<j\}$, where edge $(i,j)\in \mathcal{E}$ connects node $i$ to
    node $j$. Note that we use the ordered pair convention for ease of presentation and the underlying graph is assumed to be undirected. The graph $\mathcal{G}(\mathcal{N,E})$ is also assumed to be a
    \emph{weighted graph} in which each edge is weighted by $\alpha_{i,j}$,
    where $(i,j)\in \mathcal{E}$.
Figure~\ref{Fig1}~(a) illustrates an example of the complete and weighted graph $\mathcal{G}(\mathcal{N,E})$, where $n = 3$. 

We define a subgraph $\overline{\mathcal{G}}(\overline{\mathcal{N}},\overline{\mathcal{E}})$ of the graph $\mathcal{G}(\mathcal{N,E})$ by removing the node $0\in \mathcal{N}$ and all associated edges. Here, $\overline{\mathcal{N}} = \mathcal{N}\backslash\{0\} = \{1,2,\dotsc,n\}$ and $\overline{\mathcal{E}}= \{(i,j)| i\in \overline{\mathcal{N}}\; ,j\in\overline{\mathcal{N}},\; i<j\}$. That is, the subgraph $\overline{\mathcal{G}}(\overline{\mathcal{N}},\overline{\mathcal{E}}) \subset \mathcal{G}(\mathcal{N,E})$ is also a complete graph with edges that are weighted. Let $e$ denote the cardinality of the set $\overline{\mathcal{E}}$. Since the subgraph $\overline{\mathcal{G}}(\overline{\mathcal{N}},\overline{\mathcal{E}})$ is a complete graph, $e = n(n-1)/2$. Let $\textbf{\textit{H}} \in \mathbb{R}^{n\times e}$ denote the \emph{incidence matrix} of the undirected subgraph $\overline{\mathcal{G}}(\overline{\mathcal{N}},\overline{\mathcal{E}})$. The $il$-th element of the incidence matrix $\textbf{\textit{H}}$, denoted by $H_{i,l}$ with $i \in \overline{\mathcal{N}}$ and $l \in \overline{\mathcal{E}}$, is defined by 
 \begin{equation*}
     H_{i,l} := 
    \begin{cases}
      1 & \text{if edge $l$ leaves node $i$},\\
      -1 & \text{if edge $l$ enters node $i$},\\
      0 & \text{otherwise}.
    \end{cases}
 \end{equation*}
 \begin{figure}[t]
    \centering
\includegraphics[width=\linewidth]{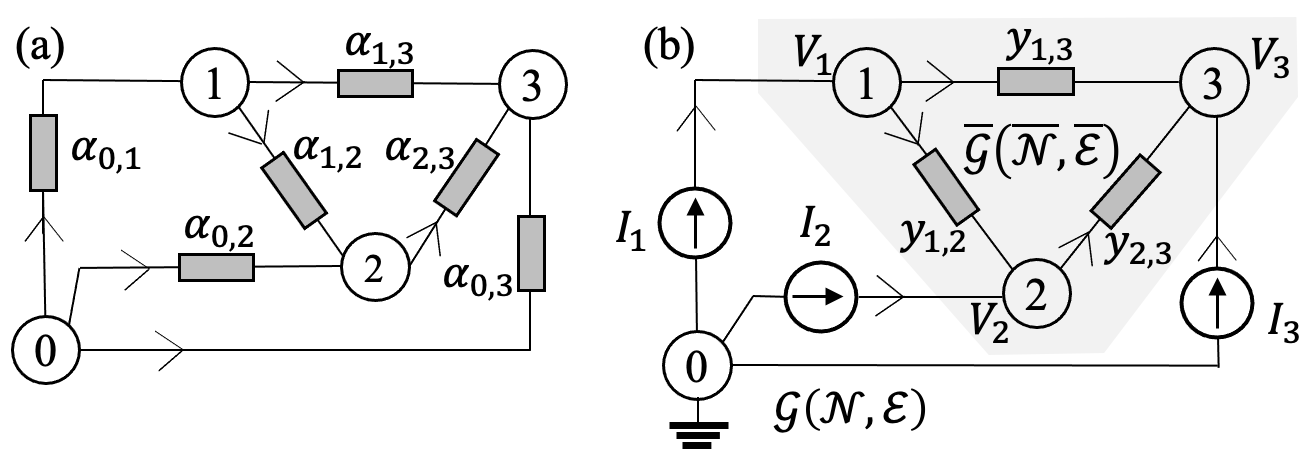}
    \caption{An example graph $\mathcal{G}(\mathcal{N,E})$: (a) where $n=3$; and (b) where $n=3$ and represents a network of admittance. The shaded area represents subgraph $\overline{\mathcal{G}}(\overline{\mathcal{N}},\overline{\mathcal{E}}) \subset \mathcal{G}(\mathcal{N,E})$, where network admittance parameters are represented by $y_{1,3}, y_{1,2}, y_{2,3}$.} \label{Fig1}
\end{figure} 

 Throughout the paper, we use the subgraph $\overline{\mathcal{G}}(\overline{\mathcal{N}},\overline{\mathcal{E}})$ to represent any complete network of admittances. Also, $0\in \mathcal{N}$ represents a $\emph{ground node}$ and the weights along edges $\{(0,1),(0,2),(0,3),\dotsc,(0,n)\} \subset \mathcal{E}$ represent current sources. Each weight on the edges $\overline{\mathcal{E}} \subset \mathcal{E}$ corresponds to an \emph{admittance parameter}, $y_{i,j} \in \mathbb{C}$, $\forall (i,j) \in \overline{\mathcal{E}}$; e.g., the admittance of a transmission line or a transformer. The \emph{voltage} at node $j\in \overline{\mathcal{N}}$ is denoted by $V_j$, where the voltage is measured with respect to ground. Each current source injects current into node $j\in\overline{\mathcal{N}}$, and we denote the injected \emph{current} by $I_j$. This construction is illustrated in Figure~\ref{Fig1}~(b). 

To develop the relationship between the voltages $V_j$ and the currents $I_j$, where $j\in \overline{\mathcal{N}}$, consider the subgraph $\overline{\mathcal{G}}(\overline{\mathcal{N}},\overline{\mathcal{E}})$. For all nodes in subgraph $\overline{\mathcal{G}}(\overline{\mathcal{N}},\overline{\mathcal{E}})$, the \emph{voltage vector} is denoted by $\textbf{\textit{V}} = [V_1,V_2,\dotsc, V_n]^\top \in \mathbb{C}^{n}$ and the \emph{current vector} is denoted by $\textbf{\textit{I}} = [I_1,I_2,\dotsc,I_n]^\top \in \mathbb{C}^{n}$. We denote by $\textbf{\textit{Y}} \in \mathbb{S}^{n}$ an \emph{admittance matrix}, where each element of $\textbf{\textit{Y}}$ is defined by
 \begin{equation*}
     Y_{i,j} := 
    \begin{cases}
      \sum\limits_{\substack{k = 1 \\ k\neq i}}^{n} {y_{i,k}} & \text{if $i=j$},\\
      -y_{i,j} & \text{if $i\neq j$}.
    \end{cases}
 \end{equation*} 
From \cite[Chapter 10]{Basic}, the relationship between $\textbf{\textit{V}}$ and $\textbf{\textit{I}}$, as derived from Kirchhoff's laws, is
 \begin{equation}\label{eq1}
 \textbf{\textit{I}} = \textbf{\textit{Y}}\textbf{\textit{V}},
 \end{equation}
 where $\textbf{\textit{Y}}\mathbf{1}_n = \mathbf{0}_n$. 
 
 Next, we rewrite the relationship between \textbf{\textit{V}} and \textbf{\textit{I}} by incorporating the incidence matrix \textbf{\textit{H}}. Corresponding to the complex symmetric matrix \textbf{\textit{Y}}, we define a complex \emph{admittance vector} by {\textbf{\textit{y}}}:= $\left[
-Y_{2,1},-Y_{3,1},\dotsc,-Y_{n,1},-Y_{3,2},-Y_{4,2},\dotsc,-Y_{n,n-1} \right]^\top \in \mathbb{C}^e$.
 Note that in the definition of \textbf{\textit{y}}, we have used the fact that $\textbf{\textit{Y}}\mathbf{1}_n = \mathbf{0}_n$ to remove the redundant diagonal elements of \textbf{\textit{Y}}; see also \cite{9858017}. With this notation, we can rewrite~\eqref{eq1} as
  \begin{equation}\label{eq2}
      \textbf{\textit{I}} = \textbf{\textit{H}} \mathrm{diag}(\textbf{\textit{H}}^\top \textbf{\textit{V}})\textbf{\textit{y}}.
  \end{equation}
  We denote by $\Tilde{\textbf{\textit{A}}}(\textbf{\textit{V}}) = \textbf{\textit{H}} \mathrm{diag}(\textbf{\textit{H}}^\top \textbf{\textit{V}})\in \mathbb{C}^{n\times e}$ 
  the \emph{voltage coefficient matrix}. Then,~\eqref{eq2} can be equivalently written as
 \begin{equation}\label{eq3}
 \textbf{\textit{I}} = \Tilde{\textbf{\textit{A}}}(\textbf{\textit{V}})\textbf{\textit{y}}.
 \end{equation}

Previously, we considered a single measurement of the voltage vector \textbf{\textit{V}} and the current vector \textbf{\textit{I}}. Next, we consider a series of voltage and current measurements \textbf{\textit{V}} and \textbf{\textit{I}} in order to uniquely identify the grid topology and admittance parameters. Suppose that different measurements of \textbf{\textit{V}} and \textbf{\textit{I}} are indexed by $k$, which is an element in a set $\mathcal{T}=\{1,2,\dotsc,\tau\}$, $k\in\mathcal{T}$. That is, $\textbf{\textit{V}}^{(k)}$ denotes the voltage vector measured at $k\in \mathcal{T}$ and $\textbf{\textit{I}}^{(k)}$ denotes the current vector measured at $k\in \mathcal{T}$. In the sequel, we assume that the pair vectors $(\textbf{\textit{I}}^{(k)}, \textbf{\textit{V}}^{(k)})$ are known exactly for all $k\in\mathcal{T}$ and that the measurements are such that each voltage vector
$\textbf{\textit{V}}^{(1)},\textbf{\textit{V}}^{(2)},\dotsc,\textbf{\textit{V}}^{(\tau)}$ and each current vector
$\textbf{\textit{I}}^{(1)},\textbf{\textit{I}}^{(2)},\dotsc,\textbf{\textit{I}}^{(\tau)}$ are distinct from one another, that is, $V^{(k)}_j \neq V^{(k+1)}_j$ and $I^{(k)}_j \neq I^{(k+1)}_j$, where $j\in\overline{\mathcal{N}}$ and $k\in\mathcal{T}$. It now follows from equation~\eqref{eq3} that 
 \begin{equation}\label{eq4}
     \textbf{\textit{I}}^{(k)} = \Tilde{\textbf{\textit{A}}}(\textbf{\textit{V}}^{(k)})\textbf{\textit{y}}, \forall k \in \mathcal{T},
 \end{equation}
 where $\Tilde{\textbf{\textit{A}}}(\textbf{\textit{V}}^{(k)})$ denotes the voltage coefficient matrix constructed from the voltage vector measured at $k$. Considering all measurements in $\mathcal{T}$, the voltage vectors and the current vectors are stacked as $\textbf{\textit{v}}=\left[\textbf{\textit{V}}^{(1)^{\top}}, \textbf{\textit{V}}^{(2)^{\top}},\dotsc, \textbf{\textit{V}}^{(\tau)^{\top}}\right]^\top \in \mathbb{C}^{n\tau}$ and $\textbf{\textit{i}}=\left[\textbf{\textit{I}}^{(1)^{\top}},\textbf{\textit{I}}^{(2)^{\top}},\dotsc,\textbf{\textit{I}}^{(\tau)^{\top}}\right]^\top \in \mathbb{C}^{n\tau}$. We then rewrite~\eqref{eq4} as
\begin{equation}\label{eq5} \textbf{\textit{i}}=\mathcal{\textbf{A}}(\textbf{\textit{v}})\textbf{\textit{y}},
 \end{equation}
where $\textbf{\textit{A}}(\textbf{\textit{v}}) = \left[
    {\Tilde{\textbf{\textit{A}}}(\textbf{\textit{V}}^{(1)})}^\top,{\Tilde{\textbf{\textit{A}}}(\textbf{\textit{V}}^{(2)})}^\top,\dotsc,{\Tilde{\textbf{\textit{A}}}(\textbf{\textit{V}}^{(\tau)})}^\top\right]^\top \in \mathbb{C}^{n\tau \times e}$. The $\textbf{\textit{A}}(\textbf{\textit{v}})$ matrix is also referred to as the \emph{voltage coefficient matrix} as it also contains $\textbf{\textit{H}}$ and $\textbf{\textit{V}}$, the incidence matrix and the voltage vector, respectively. 

 A \emph{nodal voltage vector} for node $i\in\overline{\mathcal{N}}$ is denoted by $\overline{\textbf{\textit{V}}}_i$ and defined by $\overline{\textbf{\textit{V}}}_i = [V_{i}^{(1)},V_{i}^{(2)},\dotsc, V_{i}^{(\tau)}]^\top \in \mathbb{R}^{\tau}$, where $\{1,2,\dotsc,\tau\} \in \mathcal{T}$. Likewise, A \emph{nodal current vector} for node $i\in\overline{\mathcal{N}}$ is denoted by $\overline{\textbf{\textit{I}}}_i$ and defined by $\overline{\textbf{\textit{I}}}_i = [I_{i}^{(1)},I_{i}^{(2)},\dotsc, I_{i}^{(\tau)}]^\top \in \mathbb{R}^{\tau}$, where $\{1,2,\dotsc,\tau\} \in \mathcal{T}$.

\section{Problem Formulation}\label{Section3}

In this section, we show that there is a minimum number of measurements required to estimate the grid topology and admittance parameters. In an a.c. electric grid, the admittance parameters are complex and in a d.c. electric grid, the admittance parameters are real. Accordingly, in Section~\ref{Section3A}, we consider the simpler case where $\textbf{\textit{Y}}$ is real; i.e., $\textbf{\textit{Y}}\in \mathbb{R}^{n\times n}$, where $\textbf{\textit{Y}}$ represents a conductance matrix. That is, $\textbf{\textit{y}} \in \mathbb{R}^{e}$ represents an unknown conductance vector. Then in Section~\ref{Section3B}, we consider the general a.c. case where $\textbf{\textit{Y}}$ is a complex matrix. 

\subsection{Conductance matrix estimation}\label{Section3A}

We first consider the simpler d.c. case where $\textbf{\textit{Y}}$ is real. 

\begin{problem}\label{p1}
   Consider a conductance network as defined by
   $\overline{\mathcal{G}}(\overline{\mathcal{N}},\overline{\mathcal{E}})$ with an unknown conductance vector $\textbf{\textit{y}} \in \mathbb{R}^{e}$.
   Suppose that $\tau$ measurements are carried out where for each measurement
   index $k\in \mathcal{T}$ the d.c. current applied to the network is
   $\textbf{\textit{I}}^{(k)} \in \mathbb{R}^{n}$ and the corresponding voltages
   $\textbf{\textit{V}}^{(k)} \in \mathbb{R}^{n}$ are measured. 
   We seek to find the minimum number of measurements
   $\tau$ required to uniquely determine the unknown conductance vector
   $\textbf{\textit{y}}$. \end{problem}

 Solving Problem~\ref{p1} is equivalent to finding conditions on $\textbf{\textit{A}}(\textbf{\textit{v}})$ that produce a unique solution to equation~\eqref{eq5}. From the Rouché-Capelli Theorem in \cite[Chapter 2]{Linear}, a unique solution exists if and only if the rank of the voltage coefficient matrix $\textbf{\textit{A}}(\textbf{\textit{v}})$ is equal to the number of unknown conductances $e$. Since $\overline{\mathcal{G}}(\overline{\mathcal{N}},\overline{\mathcal{E}})$ is a complete graph, the number of unknown conductances $e$ is equal to $n(n-1)/2$. Thus, we seek to find the minimum value of $\tau$ for which 
 \begin{equation}\label{eq6}
     \mathrm{rank}(\textbf{\textit{A}}(\textbf{\textit{v}})) = \frac{n(n-1)}{2}.
 \end{equation}

In what follows we introduce the concept of rigid frameworks, globally rigid frameworks, and relating definitions in order to establish the rank of $\textbf{\textit{A}}(\textbf{\textit{v}})$; see \cite{Therigidity, Conditions, Globally, gortler2014generic, jordan2017global, egres-14-12} for more details. 

\begin{definition}[Graph Realization {\cite[Section 1]{Conditions}}]\label{d1}
     A \emph{realization} of a graph \emph{$\overline{\mathcal{G}}$} is a mapping \emph{$\textbf{\textit{x}}$} from the nodes of \emph{$\overline{\mathcal{G}}$} to points in Euclidean space.
\end{definition}
\begin{definition}[Framework {\cite[Section 1.2]{egres-14-12}}]\label{d2}
     A \emph{$\tau$}-dimensional \emph{framework} is a pair \emph{$(\overline{\mathcal{G}}, \textbf{\textit{x}})$}, where \emph{$\overline{\mathcal{G}}$} is a graph and \emph{$\textbf{\textit{x}}$} is a realization that maps the nodes of \emph{$\overline{\mathcal{G}}$} to points in the $\tau$-dimensional Euclidean space $\mathbb{R}^\tau$.
\end{definition}
\begin{definition}[Generic Frameworks {\cite[Defintion 6]{gortler2014generic}}]\label{d3}
     A framework \emph{$(\overline{\mathcal{G}}, \textbf{\textit{x}})$} is \emph{generic} if the realization of the nodes do not satisfy any non-zero algebraic equation with rational coefficients. 
\end{definition} 
\begin{remark}[A nongeneric framework example]\label{r1}
    For the framework $(\overline{\mathcal{G}},\textbf{\textit{x}})$, $\textbf{\textit{x}}_i \in \mathbb{R}^{\tau}$ represents a point (mapped from the nodes of $\overline{\mathcal{G}}$) in $\mathbb{R}^\tau$, where $i\in\overline{\mathcal{N}}$. We write $\textbf{\textit{x}}_i = \left[x_i^{(1)},x_i^{(2)}, \dotsc,x_i^{(\tau)}\right]^\top \in \mathbb{R}^{\tau}$, and define $\textbf{\textit{x}} = \left[\textbf{\textit{x}}_1^\top,\textbf{\textit{x}}_2^\top, \dotsc, \textbf{\textit{x}}_n^\top \right]^\top \in \mathbb{R}^{n \tau}.$
 If the points $[\textbf{\textit{x}}_1,\textbf{\textit{x}}_2, \dotsc, \textbf{\textit{x}}_n]$ are linearly dependent, then the framework $(\overline{\mathcal{G}},\textbf{\textit{x}})$ is not generic.
\end{remark}

\begin{definition}[Finite Flexing {\cite[Section 2.1]{Conditions}}] \label{d4}
      A \emph{finite flexing} of a framework $(\overline{\mathcal{G}}, \textbf{\textit{x}})$ is a family of realizations  of $\overline{\mathcal{G}}$, parameterized by time $t$ so that the location of each node $i\in\overline{\mathcal{N}}$ is a differentiable function of $t$, and \begin{equation}\label{eq7}
    ||\textbf{\textit{x}}_i(t) - \textbf{\textit{x}}_j(t)||^2 = c_{i,j}, \forall (i,j) \in \overline{\mathcal{E}},
\end{equation} where each $c_{i,j}$ is a constant, and $||\textbf{\textit{x}}_i(t) - \textbf{\textit{x}}_j(t)||^2$ is the Euclidean distance between $\textbf{\textit{x}}_i$ and $\textbf{\textit{x}}_j$.
\end{definition}

\begin{remark}[Infinitesimal Motion {\cite[Section 2.1]{Conditions}}]\label{r2} 
  Differentiating both sides of~\eqref{eq7} with respect to $t$ yields
\begin{equation}\label{eq8}
    (\textbf{\textit{x}}_i - \textbf{\textit{x}}_j)^\top (\dot{\textbf{\textit{x}}}_{i} - \dot{\textbf{\textit{x}}}_{j}) = 0, \forall (i,j) \in \overline{\mathcal{E}},
\end{equation} where $\dot{\textbf{\textit{x}}}_{i}$ is the velocity of node $i$. Equation~\eqref{eq8} describes an \emph{infinitesimal motion} of the framework $(\overline{\mathcal{G}}, \textbf{\textit{x}})$.
\end{remark}

\begin{definition}[Trivial Infinitesimal Motion {\cite[Section 2.1]{Conditions}}]\label{d5}
     A \emph{trivial infinitesimal motion} refers
    to a rotation or translation of a framework. This satisfies the definition
    of a finite flexing. 
    \end{definition}

\begin{remark}[Number of trivial infinitesimal motions]\label{r3}
From \cite[Section 2.1]{Conditions} we know that, for a framework in $\tau$-dimension, there are $\tau$
    independent translations and $\tau(\tau-1)/2$ rotations. 
\end{remark}

\begin{definition}[Infinitesimal Rigidity {\cite[Section 2.1]{Conditions}}]\label{d6}
      A framework \emph{$(\overline{\mathcal{G}}, \textbf{\textit{x}})$} is \emph{infinitesimally rigid} if it has only trivial infinitesimal motions.
\end{definition}

\begin{remark}[Infinitesimal rigidity is (generic) rigidity] \label{r4}
    From \cite[Theorem 2.1]{Conditions} we know that, if a framework \emph{$(\overline{\mathcal{G}}, \textbf{\textit{x}})$} is infinitesimally rigid, it is also (generically) rigid.
\end{remark}
    

\begin{definition}[Equivalent Frameworks {\cite[Section 1.2]{egres-14-12}}]\label{d7}
     Two $\tau$-dimensional frameworks \emph{$(\overline{\mathcal{G}}, \textbf{\textit{p}})$} and \emph{$(\overline{\mathcal{G}}, \textbf{\textit{q}})$} are \emph{equivalent} if $||\textbf{\textit{p}}_i - \textbf{\textit{p}}_j||^2 = ||\textbf{\textit{q}}_i - \textbf{\textit{q}}_j||^2$ for all $(i,j) \in \overline{\mathcal{E}}$, that is, the Euclidean distance between node point $i$ and $j$ in the two frameworks \emph{$(\overline{\mathcal{G}}, \textbf{\textit{p}})$} and \emph{$(\overline{\mathcal{G}}, \textbf{\textit{q}})$} are the same for all node points connected by an edge.
\end{definition}
\begin{definition}[Congruent Frameworks {\cite[Section 1.2]{egres-14-12}}]\label{d8}
     Two $\tau$-dimensional frameworks \emph{$(\overline{\mathcal{G}}, \textbf{\textit{p}})$} and \emph{$(\overline{\mathcal{G}}, \textbf{\textit{q}})$} are \emph{congruent} if $||\textbf{\textit{p}}_i - \textbf{\textit{p}}_j||^2 = ||\textbf{\textit{q}}_i - \textbf{\textit{q}}_j||^2$ for all $i,j \in \overline{\mathcal{N}}$, that is, the Euclidean distance between node point $i$ and $j$ in the two frameworks \emph{$(\overline{\mathcal{G}}, \textbf{\textit{p}})$} and \emph{$(\overline{\mathcal{G}}, \textbf{\textit{q}})$} are the same for all node points.
\end{definition}

\begin{definition}[Globally Rigid Frameworks {\cite[Section 1.2]{egres-14-12}}]\label{d9}
    A $\tau$-dimensional framework \emph{$(\overline{\mathcal{G}}, \textbf{\textit{x}})$} is \emph{globally rigid} if any other framework that is equivalent to \emph{$(\overline{\mathcal{G}}, \textbf{\textit{x}})$} is also congruent to \emph{$(\overline{\mathcal{G}}, \textbf{\textit{x}})$}.
\end{definition}
\begin{definition}[Globally Rigid Graphs {\cite[Section 1.3]{egres-14-12}}]\label{d10}
     A graph $\overline{\mathcal{G}}$ is \emph{globally rigid} in $\mathbb{R}^{\tau}$ if every generic realization of $\overline{\mathcal{G}}$ in $\mathbb{R}^\tau$ is globally rigid.
\end{definition}
\begin{figure}[h!]
    \centering
\includegraphics[width=\linewidth]{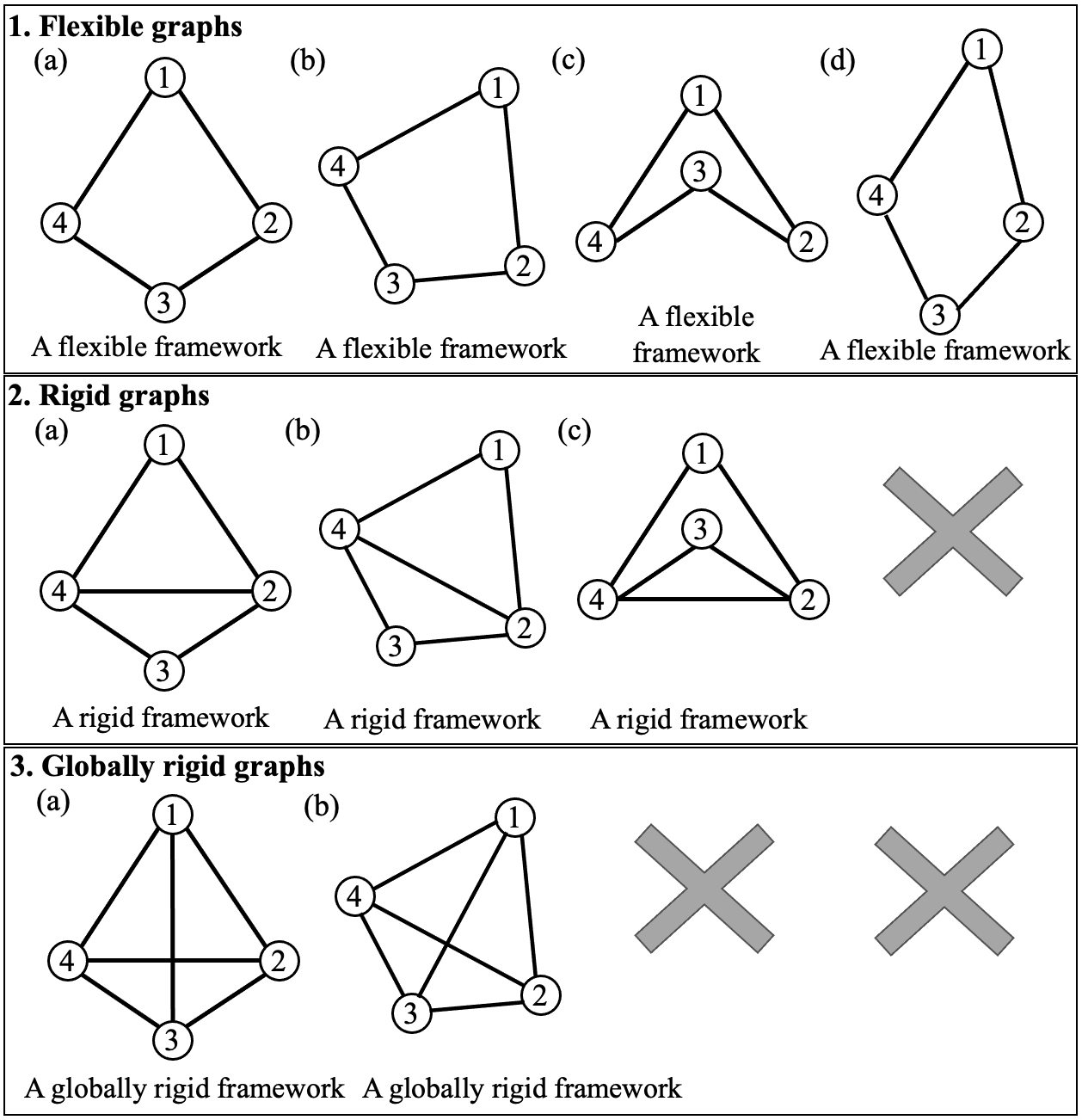}
    \caption{Graphs in two dimensions: 1. flexible; 2. rigid; and 3. globally rigid. That is, 1(a)-(d) are flexible frameworks, 2(a)-(c) are rigid frameworks, and 3(a)-(b) are globally rigid frameworks. } \label{Fig3}
\end{figure}

In Figure~\ref{Fig3}, we present an example of a flexible, a rigid, and a globally rigid graph in a two-dimensional plane. In Figure~\ref{Fig3}.1, the graph is flexible since all of its frameworks (a)-(d) are flexible. Figure~\ref{Fig3}.2 indicates a rigid graph, although it is not globally rigid. This observation is due to framework (c) is not congruent to framework (a). However, individually, each framework (a), (b), and (c) in Figure~\ref{Fig3}.2 are rigid. In Figure~\ref{Fig3}.3, we observe a globally rigid graph since framework (b) is both equivalent and congruent to framework (a). In fact, framework (b) is a rotation of framework (a). Moreover, in Figure~\ref{Fig3} we observe each framework for each respective graph is generic in the two-dimension plane.

\begin{remark}[Complete graphs are globally rigid]\label{r5}
    Following from the example presented in Figure~\ref{Fig3} and \cite[Theorem 4.1.2]{egres-14-12}, a graph $\overline{\mathcal{G}}$ is globally rigid if and only if $\overline{\mathcal{G}}$ is a complete graph. This statement, along with Definition \ref{d10}, implies that, if $\overline{\mathcal{G}}$ is a complete graph in $\mathbb{R}^\tau$, then $\overline{\mathcal{G}}$ is generically globally rigid in $\mathbb{R}^\tau$.
\end{remark}

 Next, we draw a connection between the concept of rigid frameworks and the network of admittances represented by $\overline{\mathcal{G}}(\overline{\mathcal{N}},\overline{\mathcal{E}})$.  Since $\overline{\mathcal{G}}$ is a complete graph, the corresponding framework $(\overline{\mathcal{G}}, \textbf{\textit{x}})$ is rigid. Thus, we can use~\eqref{eq8} to describe the rigidity of $(\overline{\mathcal{G}}, \textbf{\textit{x}})$. Equivalently, we can write~\eqref{eq8} as
\begin{equation}\label{eq9}
    \sum_{k = 1}^{\tau}{(x_{i}^{(k)} - x_{j}^{(k)})(\dot{x}_{i}^{(k)}-\dot{x}_{j}^{(k)})} = 0, \forall (i,j) \in \overline{\mathcal{E}}.
\end{equation} 
Also, we can write~\eqref{eq9} in matrix form as
\begin{equation}\label{eq10}
    \textbf{\textit{R}}(\textbf{\textit{x}}) \dot{\textbf{\textit{x}}} = \mathbf{0}_e,
\end{equation}
where the matrix $\textbf{\textit{R}}(\textbf{\textit{x}})\in \mathbb{R}^{e\times n\tau}$ is referred to as the \emph{rigidity matrix} of the framework, see \cite{Conditions}, and $\dot{\textbf{\textit{x}}} = \left[
\dot{\textbf{\textit{x}}}_{1}^\top,,\dot{\textbf{\textit{x}}}_{2}^\top,\dotsc,\dot{\textbf{\textit{x}}}_{n}^\top\right]^\top \in \mathbb{R}^{n\tau}$, where $\dot{\textbf{\textit{x}}}_{i} = \left[\dot{x}_{i}^{(1)},\dot{x}_{i}^{(2)},\dotsc,\dot{x}_{i}^{(\tau)}\right]^\top \in \mathbb{R}^{\tau}$.
Thus, $\dot{\textbf{\textit{x}}}$ is a null vector of $\textbf{\textit{R}}(\textbf{\textit{x}})$. The following lemma shows the relationship between the matrices $\textbf{\textit{R}}(\textbf{\textit{x}})$ and ${\textbf{\textit{A}}(\textbf{\textit{v}})}^\top$ when we choose $\textbf{\textit{x}}= \textbf{\textit{v}}$.

\begin{lemma}\label{l1}
 The matrix ${\textbf{\textit{A}}(\textbf{\textit{v}})}^{\top} \in \mathbb{R}^{e \times n\tau}$ has the same null space as the rigidity matrix $\textbf{\textit{R}}(\textbf{\textit{x}})\in \mathbb{R}^{e \times n\tau}$ when $\textbf{\textit{v}} = \textbf{\textit{x}}$.
\end{lemma}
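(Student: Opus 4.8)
The plan is to show that, once we set $x_i^{(k)} = V_i^{(k)}$, the matrices $\textbf{\textit{R}}(\textbf{\textit{x}})$ and $\textbf{\textit{A}}(\textbf{\textit{v}})^{\top}$ represent the \emph{same} edge-indexed family of linear forms, so that a vector is annihilated by one exactly when the corresponding vector is annihilated by the other. First I would unpack $\tilde{\textbf{\textit{A}}}(\textbf{\textit{V}})^{\top} = \mathrm{diag}(\textbf{\textit{H}}^{\top}\textbf{\textit{V}})\,\textbf{\textit{H}}^{\top}$. With the orientation convention that edge $l=(i,j)\in\overline{\mathcal{E}}$ leaves node $i$ and enters node $j$, the $l$-th column of $\textbf{\textit{H}}$ is $\textbf{\textit{e}}_i-\textbf{\textit{e}}_j$ (with $\textbf{\textit{e}}_m$ the $m$-th standard basis vector of $\mathbb{R}^{n}$), the $l$-th entry of $\textbf{\textit{H}}^{\top}\textbf{\textit{V}}$ is $V_i-V_j$, and hence the $l$-th row of $\tilde{\textbf{\textit{A}}}(\textbf{\textit{V}})^{\top}$ is $(V_i-V_j)(\textbf{\textit{e}}_i-\textbf{\textit{e}}_j)^{\top}$; here the orientation is immaterial, since reversing it negates both factors. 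Stacking the blocks $\tilde{\textbf{\textit{A}}}(\textbf{\textit{V}}^{(k)})^{\top}$ side by side over $k\in\mathcal{T}$ gives $\textbf{\textit{A}}(\textbf{\textit{v}})^{\top}$, so for any $\textbf{\textit{w}}\in\mathbb{R}^{n\tau}$ partitioned into measurement blocks $\textbf{\textit{w}}^{(1)},\dots,\textbf{\textit{w}}^{(\tau)}\in\mathbb{R}^{n}$ the $(i,j)$-th component of $\textbf{\textit{A}}(\textbf{\textit{v}})^{\top}\textbf{\textit{w}}$ equals $\sum_{k=1}^{\tau}(V_i^{(k)}-V_j^{(k)})(w_i^{(k)}-w_j^{(k)})$.

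Next I would observe, by rewriting~\eqref{eq9}--\eqref{eq10}, that the $(i,j)$-th component of $\textbf{\textit{R}}(\textbf{\textit{x}})\dot{\textbf{\textit{x}}}$ equals $\sum_{k=1}^{\tau}(x_i^{(k)}-x_j^{(k)})(\dot{x}_i^{(k)}-\dot{x}_j^{(k)})$. Putting $\textbf{\textit{v}}=\textbf{\textit{x}}$, i.e., $V_i^{(k)}=x_i^{(k)}$ for all $i\in\overline{\mathcal{N}}$ and $k\in\mathcal{T}$, these two expressions coincide once $w_i^{(k)}$ is identified with $\dot{x}_i^{(k)}$, and they are read off the same edge set $\overline{\mathcal{E}}$; the only subtlety is that $\dot{\textbf{\textit{x}}}$ lists its $n\tau$ entries node-by-node whereas the columns of $\textbf{\textit{A}}(\textbf{\textit{v}})^{\top}$ are grouped measurement-by-measurement. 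Letting $\textbf{\textit{P}}$ denote the fixed permutation matrix that converts one grouping into the other, this is precisely the identity $\textbf{\textit{A}}(\textbf{\textit{v}})^{\top}=\textbf{\textit{R}}(\textbf{\textit{x}})\textbf{\textit{P}}$; consequently $\textbf{\textit{w}}\in\mathrm{null}(\textbf{\textit{A}}(\textbf{\textit{v}})^{\top})$ if and only if $\textbf{\textit{P}}\textbf{\textit{w}}\in\mathrm{null}(\textbf{\textit{R}}(\textbf{\textit{x}}))$, so the two null spaces agree under this identification and, in particular, have equal dimension---whence $\mathrm{rank}(\textbf{\textit{A}}(\textbf{\textit{v}})) = \mathrm{rank}(\textbf{\textit{R}}(\textbf{\textit{x}}))$, which is exactly what the subsequent counting argument needs. (If instead one fixes a single ordering convention for both $\textbf{\textit{v}}$ and $\dot{\textbf{\textit{x}}}$ at the outset, then $\textbf{\textit{P}}$ is the identity and $\textbf{\textit{A}}(\textbf{\textit{v}})^{\top}=\textbf{\textit{R}}(\textbf{\textit{x}})$ literally.)

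I expect the only genuine obstacle to be bookkeeping: correctly reading off the row structure of $\textbf{\textit{H}}\,\mathrm{diag}(\textbf{\textit{H}}^{\top}\textbf{\textit{V}})$, verifying thereby that the sign/orientation convention used in $\textbf{\textit{H}}$ cancels in the product, and carefully tracking the node-major versus measurement-major ordering of the $n\tau$ coordinates so that the phrase ``same null space'' is stated unambiguously. Once that is in hand, the equality of the two matrices---and hence of their kernels---is immediate.
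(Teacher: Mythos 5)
Your proposal is correct and follows essentially the same route as the paper's proof: both unpack $\tilde{\textbf{\textit{A}}}(\textbf{\textit{V}}^{(k)})^{\top} = \mathrm{diag}(\textbf{\textit{H}}^{\top}\textbf{\textit{V}}^{(k)})\textbf{\textit{H}}^{\top}$ edge by edge to show that the $(i,j)$-th component of $\textbf{\textit{A}}(\textbf{\textit{v}})^{\top}\textbf{\textit{w}}$ is $\sum_{k=1}^{\tau}(V_i^{(k)}-V_j^{(k)})(w_i^{(k)}-w_j^{(k)})$, which is exactly the rigidity constraint~\eqref{eq9} under the identification $\textbf{\textit{v}}=\textbf{\textit{x}}$, $\textbf{\textit{w}}=\dot{\textbf{\textit{x}}}$. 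Your explicit handling of the measurement-major versus node-major coordinate ordering via the permutation $\textbf{\textit{P}}$ is a small improvement in rigor over the paper, which silently elides this relabeling when asserting the null spaces are ``the same.''
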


Lemma~\ref{l1} implies that both ${\textbf{\textit{A}}(\textbf{\textit{v}})}^\top$ and $\textbf{\textit{R}}(\textbf{\textit{x}})$ have the same null space and the same column space dimension $n\tau$, when $\textbf{\textit{v}} = \textbf{\textit{x}}$. Accordingly, the rank of ${\textbf{\textit{A}}(\textbf{\textit{v}})}^\top$ and $\textbf{\textit{R}}(\textbf{\textit{x}})$ are also the same when $\textbf{\textit{v}} = \textbf{\textit{x}}$. Therefore, to investigate the rank of ${\textbf{\textit{A}}(\textbf{\textit{v}})}^\top$ we can equivalently study the rank of $\textbf{\textit{R}}(\textbf{\textit{x}})$. The rank of $\textbf{\textit{R}}(\textbf{\textit{x}})$ is obtained in Lemma~\ref{l2} below.

\begin{lemma}\label{l2}
     For a framework in $\mathbb{R}^\tau$ which corresponds to a complete graph, assume that the vectors $[\textbf{\textit{x}}_1,\textbf{\textit{x}}_2, \dotsc, \textbf{\textit{x}}_n]$ are generic in the sense of Definition~\ref{d3}. Then the rank of the corresponding rigidity matrix $\textbf{\textit{R}}(\textbf{\textit{x}})$ is equal to $n\tau - \tau(\tau + 1)/2$ when $n\geq \tau$. 
\end{lemma}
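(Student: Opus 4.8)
The plan is to obtain $\mathrm{rank}(\textbf{\textit{R}}(\textbf{\textit{x}}))$ from the rank--nullity theorem by computing $\dim\ker\textbf{\textit{R}}(\textbf{\textit{x}})$. Since $\overline{\mathcal{G}}$ is complete, by Remark~\ref{r5} it is (generically) globally rigid and hence rigid; for a generic realization, Remark~\ref{r4} together with the converse for generic frameworks (cf.\ \cite{Conditions}) then gives that $(\overline{\mathcal{G}},\textbf{\textit{x}})$ is \emph{infinitesimally} rigid. By Definition~\ref{d6} this means every null vector of $\textbf{\textit{R}}(\textbf{\textit{x}})$, i.e.\ every infinitesimal motion in the sense of \eqref{eq8}--\eqref{eq10}, is a trivial infinitesimal motion. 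Hence $\ker\textbf{\textit{R}}(\textbf{\textit{x}})$ coincides with the space $\mathcal{W}$ of velocity fields $\dot{\textbf{\textit{x}}}_i = \textbf{\textit{S}}\textbf{\textit{x}}_i + \textbf{\textit{t}}$, $i\in\overline{\mathcal{N}}$, generated by a translation $\textbf{\textit{t}}\in\mathbb{R}^\tau$ and an infinitesimal rotation encoded by a skew-symmetric matrix $\textbf{\textit{S}}\in\mathbb{R}^{\tau\times\tau}$.

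Next I would show $\dim\mathcal{W}=\tau(\tau+1)/2$ exactly when $n\ge\tau$. Consider the linear map $\phi$ sending $(\textbf{\textit{S}},\textbf{\textit{t}})$ --- ranging over a space of dimension $\tau(\tau-1)/2+\tau=\tau(\tau+1)/2$ by Remark~\ref{r3} --- to the stacked vector $[(\textbf{\textit{S}}\textbf{\textit{x}}_1+\textbf{\textit{t}})^\top,\dots,(\textbf{\textit{S}}\textbf{\textit{x}}_n+\textbf{\textit{t}})^\top]^\top\in\mathbb{R}^{n\tau}$; its image is $\mathcal{W}$, so it suffices to prove $\phi$ is injective. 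If $\textbf{\textit{S}}\textbf{\textit{x}}_i+\textbf{\textit{t}}=\mathbf{0}_\tau$ for all $i$, then $\textbf{\textit{S}}(\textbf{\textit{x}}_i-\textbf{\textit{x}}_1)=\mathbf{0}_\tau$ for all $i$, so $\textbf{\textit{S}}$ vanishes on $\mathrm{span}\{\textbf{\textit{x}}_2-\textbf{\textit{x}}_1,\dots,\textbf{\textit{x}}_n-\textbf{\textit{x}}_1\}$. Genericity implies any at most $\tau+1$ of the $\textbf{\textit{x}}_i$ are affinely independent, so this span has dimension $\min(n-1,\tau)$, which is $\ge\tau-1$ once $n\ge\tau$; hence $\dim\ker\textbf{\textit{S}}\ge\tau-1$ and $\mathrm{rank}\,\textbf{\textit{S}}\le1$. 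Because a skew-symmetric matrix has even rank, $\textbf{\textit{S}}=\mathbf{0}$, and then $\textbf{\textit{t}}=\mathbf{0}_\tau$. Thus $\phi$ is injective, $\dim\ker\textbf{\textit{R}}(\textbf{\textit{x}})=\tau(\tau+1)/2$, and rank--nullity on $\textbf{\textit{R}}(\textbf{\textit{x}})\in\mathbb{R}^{e\times n\tau}$ yields $\mathrm{rank}(\textbf{\textit{R}}(\textbf{\textit{x}}))=n\tau-\tau(\tau+1)/2$.

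I expect the crux to be this injectivity step and, within it, the boundary case $n=\tau$: there the points affinely span only a hyperplane, and one genuinely needs the even-rank property of skew-symmetric matrices to rule out a residual infinitesimal rotation lying in that hyperplane. This is also precisely where the hypothesis $n\ge\tau$ enters rather than the more naive $n\ge\tau+1$: when $n<\tau$ the orthogonal complement of the affine span has dimension at least two and therefore supports a nonzero skew-symmetric matrix vanishing on the point set, so $\phi$ fails to be injective and the nullity drops below $\tau(\tau+1)/2$. A secondary point worth spelling out is the passage ``complete $\Rightarrow$ rigid $\Rightarrow$ (generically) infinitesimally rigid''; if one prefers to avoid invoking the converse of Remark~\ref{r4}, one can instead argue directly that any $\dot{\textbf{\textit{x}}}$ with $(\textbf{\textit{x}}_i-\textbf{\textit{x}}_j)^\top(\dot{\textbf{\textit{x}}}_i-\dot{\textbf{\textit{x}}}_j)=0$ for \emph{all} pairs $i,j$ (all pairs being edges of the complete graph) must take the form $\dot{\textbf{\textit{x}}}_i=\textbf{\textit{S}}\textbf{\textit{x}}_i+\textbf{\textit{t}}$: define $\textbf{\textit{S}}$ on $\mathrm{span}\{\textbf{\textit{x}}_i-\textbf{\textit{x}}_1\}$ by $\textbf{\textit{x}}_i-\textbf{\textit{x}}_1\mapsto\dot{\textbf{\textit{x}}}_i-\dot{\textbf{\textit{x}}}_1$, check it is well defined and skew from the pairwise relations, and extend it skew-symmetrically to $\mathbb{R}^\tau$; the rank count then proceeds as above.
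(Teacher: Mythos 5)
Your proof follows essentially the same route as the paper's: identify $\ker\textbf{\textit{R}}(\textbf{\textit{x}})$ with the space of trivial infinitesimal motions of the generic complete framework, count its dimension as $\tau(\tau+1)/2$, and apply rank--nullity. You are in fact more careful than the paper on two points it leaves implicit --- that genericity is needed to pass from rigidity to infinitesimal rigidity, and that the $\tau(\tau+1)/2$ trivial motions are linearly independent as vectors in $\mathbb{R}^{n\tau}$ precisely when $n\geq\tau$ (your even-rank argument for the boundary case $n=\tau$).
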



In the following theorem, we derive a formula for the minimum number of measurements $\tau$ required to estimate the grid topology and conductance parameters. 

\begin{theorem}\label{t1}
Assuming that the nodal voltage vectors $[\overline{\textbf{\textit{V}}}_1, \overline{\textbf{\textit{V}}}_2,\dotsc, \overline{\textbf{\textit{V}}}_n]$ are generic in the sense of Definition~\ref{d3}, the minimum number of measurements $\tau$ such that $\textbf{\textit{i}}=\mathcal{\textbf{A}}(\textbf{\textit{v}})\textbf{\textit{y}}$ (see equation~\eqref{eq5}) has a unique solution $\textbf{\textit{y}}$ is $\tau = n-1$.
\end{theorem}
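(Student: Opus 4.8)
The plan is to reduce the uniqueness question to a rank condition and then chase that rank through Lemmas~\ref{l1} and~\ref{l2}. First I would invoke the Rouché-Capelli Theorem exactly as in the discussion preceding~\eqref{eq6}: equation~\eqref{eq5} has a unique solution $\textbf{\textit{y}}$ if and only if $\mathrm{rank}(\textbf{\textit{A}}(\textbf{\textit{v}})) = e = n(n-1)/2$. Since a matrix and its transpose have the same rank, this is equivalent to $\mathrm{rank}({\textbf{\textit{A}}(\textbf{\textit{v}})}^\top) = n(n-1)/2$.

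Next I would identify the framework point set with the nodal voltage vectors, that is, take $\textbf{\textit{x}}_i = \overline{\textbf{\textit{V}}}_i$ for $i\in\overline{\mathcal{N}}$ so that $\textbf{\textit{x}} = \textbf{\textit{v}}$; the genericity hypothesis on $[\overline{\textbf{\textit{V}}}_1,\dotsc,\overline{\textbf{\textit{V}}}_n]$ is then precisely the genericity hypothesis required by Lemma~\ref{l2}. By Lemma~\ref{l1}, ${\textbf{\textit{A}}(\textbf{\textit{v}})}^\top$ and the rigidity matrix $\textbf{\textit{R}}(\textbf{\textit{x}})$ share the same null space and hence the same rank, and by Lemma~\ref{l2} (which applies since we only ever test $\tau\le n$) we have $\mathrm{rank}(\textbf{\textit{R}}(\textbf{\textit{x}})) = n\tau - \tau(\tau+1)/2$. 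So the uniqueness condition collapses to the scalar equation $n\tau - \tau(\tau+1)/2 = n(n-1)/2$.

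Then I would solve this equation: multiplying through by $2$ and collecting terms yields $(n-\tau)^2 - (n-\tau) = 0$, i.e. $(n-\tau)(n-\tau-1)=0$, whose only solutions are $\tau = n$ and $\tau = n-1$. To see that $n-1$ is the \emph{minimum} admissible value, I would observe that the integer-valued function $f(\tau) := n\tau - \tau(\tau+1)/2$ is strictly increasing on $\{1,\dotsc,n-1\}$, since $f(\tau+1)-f(\tau) = n-\tau-1 > 0$ for $\tau < n-1$; hence $f(\tau) < f(n-1) = n(n-1)/2 = e$ whenever $1\le \tau < n-1$, so for every such $\tau$ the matrix $\textbf{\textit{A}}(\textbf{\textit{v}})$ has rank strictly below $e$ and~\eqref{eq5} cannot have a unique solution. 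Combined with the fact that $\tau=n-1$ does achieve rank $e$, this gives $\tau = n-1$ as claimed.

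I do not expect a serious obstacle: the only points requiring care are bookkeeping the hypothesis $n\geq\tau$ of Lemma~\ref{l2} (satisfied throughout, since we only need $\tau\leq n-1$) and checking that the identification $\textbf{\textit{x}}=\textbf{\textit{v}}$ transports genericity correctly so that Lemma~\ref{l2} is legitimately invoked. The remainder is the short algebra of factoring $(n-\tau)(n-\tau-1)$ and the monotonicity argument that excludes $\tau<n-1$.
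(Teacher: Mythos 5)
Your proposal is correct and follows essentially the same route as the paper: Rouch\'e--Capelli reduces uniqueness to $\mathrm{rank}(\textbf{\textit{A}}(\textbf{\textit{v}}))=n(n-1)/2$, Lemmas~\ref{l1} and~\ref{l2} give $\mathrm{rank}(\textbf{\textit{A}}(\textbf{\textit{v}}))=n\tau-\tau(\tau+1)/2$, and the resulting equation~\eqref{eq18} is solved for the minimal $\tau$. Your explicit factorization $(n-\tau)(n-\tau-1)=0$ and the monotonicity argument ruling out $\tau<n-1$ merely supply details the paper leaves implicit.
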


\begin{remark}\label{r6}
If nodal voltage vectors $[\overline{\textbf{\textit{V}}}_1, \overline{\textbf{\textit{V}}}_2,\dotsc, \overline{\textbf{\textit{V}}}_n]$
are linearly dependent, then as discussed in Remark~\ref{r1}, they are not generic as per Definition~\ref{d3}. 
\end{remark}
\subsection{Admittance matrix estimation}\label{Section3B}

We now consider the general a.c. case where $\textbf{\textit{Y}}\in \mathbb{C}^{n\times n}$ is a complex matrix. That is, the admittance matrix encapsulates both conductance and susceptance. We assume that a series of sinusoidal current $\textbf{\textit{i}} \in \mathbb{C}^{n\tau}$ are applied and $\tau$ measurements are collected. For applications in power systems, we assume the sinusoidal currents are applied at the same frequency and $\tau$ synchrophasor measurements (both the current and voltage phasors) are collected at each node via precise PMUs.

\begin{problem}\label{p2}
   Consider an admittance network as defined by
   $\overline{\mathcal{G}}(\overline{\mathcal{N}},\overline{\mathcal{E}})$ with
   an unknown admittance vector $\textbf{\textit{y}}\in \mathbb{C}^{e}$. Suppose
   that $\tau$ measurements are carried out where for each measurement index
   $k\in \mathcal{T}$ the current applied to the network is
   $\textbf{\textit{I}}^{(k)} \in \mathbb{C}^{n}$ and the corresponding voltage
   $\textbf{\textit{V}}^{(k)} \in \mathbb{C}^{n}$ is measured. Assuming that the
   nodal voltage vectors $[\overline{\textbf{\textit{V}}}_1,
   \overline{\textbf{\textit{V}}}_2,\dotsc, \overline{\textbf{\textit{V}}}_n]$
   are generic, we seek to find the minimum number of measurements $\tau$
   required to uniquely determine the unknown admittance vector
   $\textbf{\textit{y}}$. 
   \end{problem}


We can define the concept of rigid frameworks in the complex space the same way as in the real space. Now, a realization of a graph $\overline{\mathcal{G}}$ is a mapping $\textbf{\textit{x}}$ from the nodes of \emph{$\overline{\mathcal{G}}$} to points in the complex space. Accordingly, the voltage coefficient matrix and the rigidity matrix are in the complex space, that is, ${\textbf{\textit{A}}(\textbf{\textit{v}})}^{\top} \in \mathbb{C}^{e \times n\tau}$ and $\textbf{\textit{R}}(\textbf{\textit{x}})\in \mathbb{C}^{e \times n\tau}$. Similar to the real case, we note that ${\textbf{\textit{A}}(\textbf{\textit{v}})}^{\top}$ has the same null space as $\textbf{\textit{R}}(\textbf{\textit{x}})$ when $\textbf{\textit{v}}= \textbf{\textit{x}}$. Hence, ${\textbf{\textit{A}}(\textbf{\textit{v}})}^{\top}$ has the same rank and nullity as $\textbf{\textit{R}}(\textbf{\textit{x}})$ when $\textbf{\textit{v}} = \textbf{\textit{x}}$.

To find the rank of the rigidity matrix $\textbf{\textit{R}}(\textbf{\textit{x}})$ in the complex space $\mathbb{C}$, we also briefly introduce the concept of a rigidity matriod; see \cite{Globally, oxley2011matroid} for more details. The \emph{rigidity matroid} of a graph $\overline{\mathcal{G}}$, denoted by $\mathcal{R}(\overline{\mathcal{G}})$, is a matroid defined on the edges set of $\overline{\mathcal{G}}$ which reflects the rigidity properties of all generic realizations of $\overline{\mathcal{G}}$ \cite[Section 2.2]{Globally}. According to \cite[Section 6.1]{oxley2011matroid}, a $b$-element rank-$r$ matroid $\mathcal{M}$ is representable in a field $\mathbb{F}$ if and only if $\mathcal{M}$ is isomorphic to $\mathcal{M}(\textbf{\textit{X}})$ for some rank-$r$ $a\times b$ matrix $\textbf{\textit{X}}$ in $\mathbb{F}$ with $a\geq r$. As for the rigidity matroid $\mathcal{R}(\overline{\mathcal{G}})$, a linear representation of the rigidity matroid $\mathcal{R}(\overline{\mathcal{G}})$ in $\mathbb{R}$ or $\mathbb{C}$ is the rigidity matrix $\textbf{\textit{R}}(\overline{\mathcal{G}},\textbf{\textit{x}})$ in $\mathbb{R}$ or $\mathbb{C}$, respectively.

Next, we provide the rank of $\textbf{\textit{A}}(\textbf{\textit{v}})$ in the complex case as in Lemma \ref{l3} below.

\begin{lemma}\label{l3}
     For the general a.c. case where $\textbf{\textit{A}}(\textbf{\textit{v}}) \in \mathbb{C}^{n\tau \times e}$, the rank of $\textbf{\textit{A}}(\textbf{\textit{v}})$ is equal to $n\tau - \tau(\tau + 1)/2$ when $n\geq \tau$.  
\end{lemma}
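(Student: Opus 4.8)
The plan is to reduce the complex case to the real case already settled in Lemma~\ref{l2}, using the rigidity matroid as the bridge. First, exactly as in Lemma~\ref{l1} and the remark preceding this lemma, the matrix ${\textbf{\textit{A}}(\textbf{\textit{v}})}^{\top}\in\mathbb{C}^{e\times n\tau}$ has the same null space, and therefore the same rank, as the complex rigidity matrix $\textbf{\textit{R}}(\textbf{\textit{x}})\in\mathbb{C}^{e\times n\tau}$ when $\textbf{\textit{v}}=\textbf{\textit{x}}$; since transposition preserves rank, $\mathrm{rank}(\textbf{\textit{A}}(\textbf{\textit{v}}))=\mathrm{rank}(\textbf{\textit{R}}(\textbf{\textit{x}}))$, and it suffices to compute the rank of $\textbf{\textit{R}}(\textbf{\textit{x}})$ for a realization $\textbf{\textit{x}}$ of the complete graph $\overline{\mathcal{G}}$ whose nodal coordinate vectors are generic in the sense of Definition~\ref{d3}.

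Next I would invoke the matroid machinery set up above. For a generic realization, the rank of the rigidity matrix equals the rank of the rigidity matroid $\mathcal{R}(\overline{\mathcal{G}})$, because $\textbf{\textit{R}}(\overline{\mathcal{G}},\textbf{\textit{x}})$ is a linear representation of $\mathcal{R}(\overline{\mathcal{G}})$ over the field in question, and by the representability statement from \cite[Section 6.1]{oxley2011matroid} every representation of a matroid has the same rank as the matroid itself. The decisive observation is that $\mathcal{R}(\overline{\mathcal{G}})$ is one and the same abstract matroid whether it is represented over $\mathbb{R}$ or over $\mathbb{C}$: it depends only on the graph $\overline{\mathcal{G}}$ and the dimension $\tau$, not on the ground field, and it is representable over each of $\mathbb{R}$ and $\mathbb{C}$ via the corresponding rigidity matrix. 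Hence the generic rank of $\textbf{\textit{R}}(\textbf{\textit{x}})$ over $\mathbb{C}$ coincides with the generic rank of $\textbf{\textit{R}}(\textbf{\textit{x}})$ over $\mathbb{R}$, which Lemma~\ref{l2} evaluated to $n\tau-\tau(\tau+1)/2$ for $n\geq\tau$. Chaining this with the first step yields $\mathrm{rank}(\textbf{\textit{A}}(\textbf{\textit{v}}))=n\tau-\tau(\tau+1)/2$ when $n\geq\tau$, with the hypothesis $n\geq\tau$ inherited directly from Lemma~\ref{l2}.

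A more self-contained alternative would bypass matroids and argue that the rank of the polynomial matrix $\textbf{\textit{R}}(\textbf{\textit{x}})$ over the field of rational functions equals the size of its largest minor that is not identically zero, a combinatorial quantity unchanged when the base field is extended from $\mathbb{R}$ to $\mathbb{C}$ (both have characteristic zero), and a point with generic coordinates in either field attains this maximal rank. Either way, I expect the main obstacle to be precisely this field-independence step: over $\mathbb{R}$ the number $\tau(\tau+1)/2$ came from the geometric fact that a complete graph is rigid and its only infinitesimal motions are the $\tau$ translations and $\tau(\tau-1)/2$ rotations (Remarks~\ref{r3} and~\ref{r5}), whereas over $\mathbb{C}$ the notions of rigidity and rotation are less transparent and the relevant isometry group is the complex orthogonal group; one must be sure the count does not change. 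Routing the argument through the rigidity matroid is the cleanest way to discharge this, since the matroid and its rank are insensitive to whether the representing field is $\mathbb{R}$ or $\mathbb{C}$, and a minor point to verify along the way is that a realization with generic nodal coordinate vectors in the sense of Definition~\ref{d3} is indeed a generic realization for $\mathcal{R}(\overline{\mathcal{G}})$, so that the matroid rank is actually achieved.
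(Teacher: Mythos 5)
Your proposal is correct and follows essentially the same route as the paper: both identify $\mathrm{rank}(\textbf{\textit{A}}(\textbf{\textit{v}}))$ with the rank of the complex rigidity matrix of a generic realization, equate that with the rank of the rigidity matroid of $\overline{\mathcal{G}}$, and use the field-independence of that matroid (its rank over $\mathbb{C}$ equals its rank over $\mathbb{R}$, namely $n\tau-\tau(\tau+1)/2$ for $n\geq\tau$) to conclude. The only difference is cosmetic: the paper additionally routes through the transfer of generic global rigidity from $\mathbb{R}^\tau$ to $\mathbb{C}^\tau$, a step your argument (correctly) indicates is not needed for the rank computation, and your alternative justification via non-vanishing generic minors is a clean way to see the same field-independence.
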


 The following theorem provides a solution to Problem~\ref{p2}.
\begin{theorem}\label{t2}
Consider the general a.c. case where $\textbf{\textit{Y}}\in \mathbb{C}^{n\times n}$ is a complex matrix. Assuming that the nodal voltage vectors $[\overline{\textbf{\textit{V}}}_1, \overline{\textbf{\textit{V}}}_2,\dotsc, \overline{\textbf{\textit{V}}}_n]$ are generic in the sense of Definition~\ref{d3}, the minimum number of measurements $\tau$ such that $\textbf{\textit{i}}=\mathcal{\textbf{A}}(\textbf{\textit{v}})\textbf{\textit{y}}$ (see~\eqref{eq5}) has a unique solution $\textbf{\textit{y}}$ is $\tau = n-1$.
\end{theorem}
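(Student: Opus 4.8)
The plan is to follow the proof of Theorem~\ref{t1} essentially verbatim, with the only substantive change being that the rank of $\textbf{\textit{A}}(\textbf{\textit{v}})$ in the complex setting is now supplied by Lemma~\ref{l3} rather than by the pair Lemma~\ref{l1}--Lemma~\ref{l2}. First I would invoke the Rouch\'e-Capelli Theorem \cite[Chapter 2]{Linear}: the overdetermined system \eqref{eq5}, namely $\textbf{\textit{i}} = \textbf{\textit{A}}(\textbf{\textit{v}})\textbf{\textit{y}}$ with $\textbf{\textit{y}} \in \mathbb{C}^{e}$, has a unique solution $\textbf{\textit{y}}$ if and only if $\mathrm{rank}(\textbf{\textit{A}}(\textbf{\textit{v}}))$ equals the number of unknowns. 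Because $\overline{\mathcal{G}}(\overline{\mathcal{N}},\overline{\mathcal{E}})$ is a complete graph, this count is $e = n(n-1)/2$, exactly as in the d.c.\ case; nothing in the counting of unknowns depends on whether the field is $\mathbb{R}$ or $\mathbb{C}$.

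Next I would substitute the rank provided by Lemma~\ref{l3}. Under the genericity hypothesis on the nodal voltage vectors $[\overline{\textbf{\textit{V}}}_1, \dotsc, \overline{\textbf{\textit{V}}}_n]$ and the identification $\textbf{\textit{v}} = \textbf{\textit{x}}$, Lemma~\ref{l3} gives $\mathrm{rank}(\textbf{\textit{A}}(\textbf{\textit{v}})) = n\tau - \tau(\tau+1)/2$ whenever $n \geq \tau$. Equating this to $n(n-1)/2$ reproduces equation~\eqref{eq18}; the associated quadratic $\tau^2 - (2n-1)\tau + n(n-1) = 0$ has discriminant $1$ and hence roots $\tau = n-1$ and $\tau = n$, so the smallest admissible value is $\tau = n-1$. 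Since $\tau = n-1$ also satisfies $n \geq \tau$, the hypothesis of Lemma~\ref{l3} is met, and therefore $\tau = n-1$ is the minimum number of measurements for which \eqref{eq5} has a unique solution $\textbf{\textit{y}}$, establishing the theorem and answering Problem~\ref{p2}.

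The step I expect to carry the most weight is not the arithmetic --- which is identical to that of Theorem~\ref{t1} --- but ensuring that the genericity assumption is strong enough in the complex setting for Lemma~\ref{l3} to be applied to the realization $\textbf{\textit{x}} = \textbf{\textit{v}}$ coming from the measured voltages. Concretely, one must verify that the chain used inside Lemma~\ref{l3} (complete graph $\Rightarrow$ generic global rigidity in $\mathbb{R}^{\tau}$ $\Rightarrow$ generic global rigidity in $\mathbb{C}^{\tau}$ $\Rightarrow$ the rigidity matrix attaining the maximal rigidity-matroid rank $n\tau - \tau(\tau+1)/2$) is not vacuous for this particular $\textbf{\textit{x}}$, i.e., that $[\overline{\textbf{\textit{V}}}_1, \dotsc, \overline{\textbf{\textit{V}}}_n]$ being generic (in particular linearly independent, cf.\ Remark~\ref{r1}) indeed places the framework at the generic-rank value rather than on a rank-deficient subvariety. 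Once this is in place, the proof closes exactly as in Theorem~\ref{t1}.
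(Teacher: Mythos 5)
Your proposal matches the paper's own proof of Theorem~\ref{t2} essentially step for step: invoke the Rouch\'e-Capelli criterion so that uniqueness is equivalent to $\mathrm{rank}(\textbf{\textit{A}}(\textbf{\textit{v}})) = n(n-1)/2$, substitute the rank $n\tau - \tau(\tau+1)/2$ from Lemma~\ref{l3}, and solve equation~\eqref{eq18} to obtain $\tau = n-1$ as the minimum admissible value. Your closing remark about checking that the measured realization $\textbf{\textit{x}} = \textbf{\textit{v}}$ actually attains the generic rank is a reasonable point of care, but it does not change the route, which is the same as the paper's.
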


\begin{remark}\label{r7} All proofs are included in the arXiv paper \cite{rin2024electricgridtopologyadmittance}. In \cite{rin2024electricgridtopologyadmittance}, we took a longer approach to prove Theorem~\ref{t1} compared to the proof for Theorem~\ref{t2} to show the connection between the concept of graph rigidity and the network of admittances in a power system as this relationship has not been previously known. 
\end{remark}

\begin{remark}\label{r8}
Here, we outline our approach to estimate the electric grid topology.  Specifically, starting with the complete graph $\overline{\mathcal{G}}$, collect $\tau$ measurements to find a unique solution $\textbf{\textit{y}}$. Following Theorem~\ref{t2}, the unique solution $\textbf{\textit{y}}$ corresponds to $\tau = n-1$. Then, find admittance parameters $Y_{i,j}=0$ along each edge $(i,j) \in \overline{\mathcal{E}}$, and remove these edges. The resulting graph with non-zero admittance parameters is the electric grid topology.
\end{remark}

Our approach to electric grid topology and admittance estimation can be applied to any kind of electrical network (e.g., radial, meshed, three-phase, single-phase, etc). Our approach considers a complete graph as it reflects the case of not having prior information of the yet-to-be-estimated grid topology. Then, our estimation process uncovers edges to be removed from the complete graph. Accordingly, the estimated grid topology is not a complete graph.

\section{Numerical example}

In Figure~\ref{Fig4}~(a), we present the admittance network for subgraph $\overline{\mathcal{G}}(\overline{\mathcal{N}},\overline{\mathcal{E}})$, with $n = 4$. In this section, we illustrate the result of Theorem~\ref{t2} for the general a.c. case where the admittance network is defined by subgraph $\overline{\mathcal{G}}(\overline{\mathcal{N}},\overline{\mathcal{E}})$ as presented in Figure~\ref{Fig4}~(a). Specifically, we illustrate the minimum number $\tau$ of voltage measurements and corresponding current measurements required to estimate the topology and admittance parameters of the network. 

\begin{figure}[!ht]
    \centering
\includegraphics[width=0.84\linewidth]{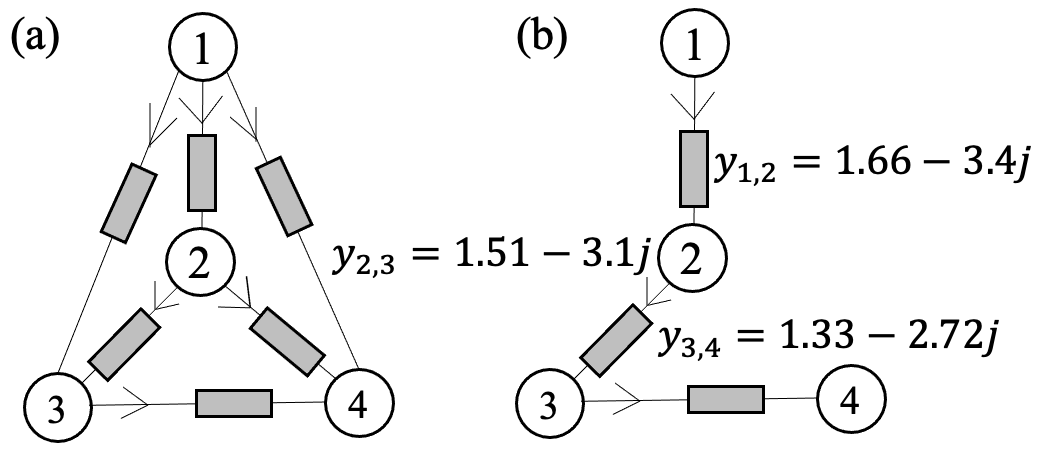}
    \caption{(a) Admittance network as defined by subgraph $\overline{\mathcal{G}}(\overline{\mathcal{N}},\overline{\mathcal{E}})$, where $n = 4$ and; (b) the corresponding estimated topology and admittance parameters of the network.} \label{Fig4}
\end{figure}

\begin{table}[h]
    \centering
    \captionof{table}{Measurements for subgraph $\overline{\mathcal{G}}(\overline{\mathcal{N}},\overline{\mathcal{E}})$, where $n = 4$.}\label{Tab1}
    \begin{tabular}{ *{3}{c} }
\hline
Node ($k = 1$)  & $\textbf{\textit{V}}^{(1)}$ & $\textbf{\textit{I}}^{(1)}$ \\
\hline
$1$  & $V_1^{(1)} = 12470 + 0.1679\jmath$ & $I_1^{(1)} = 8777 - 18290\jmath$  \\
\hline
$2$  & $V_2^{(1)} = 7107 - 36.77\jmath$ & $I_2^{(1)} = -1770 +3098\jmath$\\
\hline
$3$ & $V_3^{(1)} =2243 - 144.4\jmath$ & $I_3^{(1)} =-6975 + 14032\jmath$  \\
\hline
$4$ & $V_4^{(1)} =1894 - 303.0\jmath$ & $I_4^{(1)} =-32.08 + 1160\jmath$ \\
\hline
\end{tabular}
\begin{tabular}{ *{3}{c} }
Node ($k = 2$)  & $\textbf{\textit{V}}^{(2)}$ & $\textbf{\textit{I}}^{(2)}$ \\
\hline
$1$  & $V_1^{(2)} =12502 + 13.00\jmath$ & $I_1^{(2)} =8578 - 18349\jmath$  \\
\hline
$2$  & $V_2^{(2)} =7148 - 78.00\jmath$ & $I_2^{(2)} =-1394 +2941\jmath$\\
\hline
$3$ & $V_3^{(2)} =2205 - 167.0\jmath$ & $I_3^{(2)} =-7183 + 14503\jmath$  \\
\hline
$4$ & $V_4^{(2)} =1936 - 298.0\jmath$ & $I_4^{(2)} =-0.9120 + 905.6\jmath$ \\
\hline
\end{tabular}
\begin{tabular}{ *{3}{c} }
Node ($k = 3$)  & $\textbf{\textit{V}}^{(3)}$ & $\textbf{\textit{I}}^{(3)}$ \\
\hline
$1$  & $V_1^{(3)} =12548 + 59.00\jmath$ & $I_1^{(3)} =8417 - 18424\jmath$  \\
\hline
$2$  & $V_2^{(3)} =7195 - 79.00\jmath$ & $I_2^{(3)} =-1013 +2809\jmath$\\
\hline
$3$ & $V_3^{(3)} =2170 - 137.0\jmath$ & $I_3^{(3)} =-7692 + 14847\jmath$  \\
\hline
$4$ & $V_4^{(3)} =1984 - 334.0\jmath$ & $I_4^{(3)} =288.8 + 767.5\jmath$ \\
\hline
\end{tabular}
\vspace{-10pt}
\end{table}

Before estimating the grid topology and admittance parameters corresponding to subgraph $\overline{\mathcal{G}}(\overline{\mathcal{N}},\overline{\mathcal{E}})$, we must first collect voltage and current measurements for each node as illustrated in Figure~\ref{Fig4}~(a). In what follows, we consider three measurements of the voltage $[\textbf{\textit{V}}^{(1)},\textbf{\textit{V}}^{(2)},\textbf{\textit{V}}^{(3)}]$ and the corresponding current $[\textbf{\textit{I}}^{(1)},\textbf{\textit{I}}^{(2)},\textbf{\textit{I}}^{(3)}]$. Table~\ref{Tab1} presents the voltage and current measurements at $k = 1, 2, 3$, where $k\in\mathcal{T}$, for the subgraph $\overline{\mathcal{G}}(\overline{\mathcal{N}},\overline{\mathcal{E}})$.

We took the following approach to collect voltage and corresponding current measurements for subgraph $\overline{\mathcal{G}}(\overline{\mathcal{N}},\overline{\mathcal{E}})$ as presented in Table~\ref{Tab1}. First, we considered the IEEE 4-node feeder \cite{IEEE}. Specifically, we modify the IEEE 4-node feeder by replacing the transformer between nodes $2$ and $3$ with an admittance of $y_{2,3} = 1.51 - 3.1\jmath$. We then construct the injected current $I_j$ at each node $j\in\overline{\mathcal{N}}$, adhering to Kirchhoff's current and voltage laws, where the nodal voltage $V_j$ at each node $j\in\overline{\mathcal{N}}$ is as specified by the IEEE 4-node feeder \cite{IEEE}. The voltage and current measurements for each of the four nodes in the modified IEEE 4-node feeder were then applied to the respective nodes in subgraph $\overline{\mathcal{G}}(\overline{\mathcal{N}},\overline{\mathcal{E}})$.  

\begin{table}[h]
    \centering
    \captionof{table}{Required number of measurements $\tau$.}\label{Tab2}
    \begin{tabular}{ *{4}{c} }
\hline
\makecell{Number of \\ measurements ($\tau$)}  & $\mathrm{rank}(\textbf{\textit{A}}(\textbf{\textit{v}}))$ & \makecell{Number of \\ unknowns} & Solution \\
\hline
$\tau = 1$  & 3 & 6 & Not unique  \\
\hline $\tau = 2$  & 5 & 6 & Not unique \\
\hline $\tau = 3$ & 6 & 6 & Unique  \\
\hline
\end{tabular}
\vspace{-3pt}
\end{table}

Next, we obtain the admittance vector $\textbf{\textit{y}}$ in~\eqref{eq5} by multiplying $\textbf{\textit{i}}$ by the pseudo-inverse of $\textbf{\textit{A}}(\textbf{\textit{v}})$.  Specifically, the elements of the estimated admittance vector $\textbf{\textit{y}}$ are $y_{1,2} = 1.66 - 3.4\jmath$, $y_{2,3} = 1.51 - 3.1\jmath$, and $y_{3,4} = 1.33 - 2.72\jmath$, and all other admittance parameters are zero. Figure~\ref{Fig4}~(b) presents the corresponding estimated topology and admittance parameters. In Figure~\ref{Fig4}~(b), we observe that the estimated admittance vector $\textbf{\textit{y}}\in \mathbb{C}^{e}$ corresponds to the admittance vector of the modified IEEE 4-node feeder \cite{IEEE}. Note that the estimated grid topology is not a complete graph (see Remark~\ref{r8}).

Table~\ref{Tab2} presents the number of measurements $\tau$ required to determine a unique solution to the system of equations~\eqref{eq5} for the network as illustrated in Figure~\ref{Fig4} (a). When $\tau = 2$, 
the solution is not unique. However, when $\tau = 3$, $\mathrm{rank}(\textbf{\textit{A}}(\textbf{\textit{v}}))$ is equal to the number of unknowns, and thus the solution is unique. This is in agreement with Theorem~\ref{t2}, where at least $3$ measurements ($\tau = n-1$) are required.
\section{Conclusion}

In this paper, we considered the minimum number of voltage and current measurements required to uniquely estimate the topology and admittance parameters of an electric power grid. We showed that a rank condition for the voltage coefficient matrix of an admittance network was equivalent to a corresponding rank condition for a rigidity matrix arising in graph rigidity theory. The admittance matrix and corresponding topology were uniquely estimated when the number of voltage and current measurements was one less than the number of nodes in the admittance network. By means of a numerical example based on the IEEE 4-node radial network, we validated our proposed estimation method. 

\section*{Acknowledgement}

The first author would like to thank Dr. Nariman Mahdavi Mazdeh from CSIRO for his helpful discussions.
\bibliographystyle{IEEEtran}
\bibliography{ref.bib}

\end{document}